\pgfplotsset{compat=1.10}
\newif\if@restonecol
\theoremstyle{remark}
\newtheorem*{rem}{Remark} 
\theoremstyle{plain}
\newtheorem{prop}{Property}
\newcommand{\un}[1]{\ensuremath{\underline{#1}}}
\newcommand{\uu}[1]{\ensuremath{\un{\un{#1}}}}
\newcommand{\dime}{d}
\newcommand{\sig}{\ensuremath{\uu{\sigma}}}
\newcommand{\sigadj}{\ensuremath{\uu{\widetilde{\sigma}}}}
\newcommand{\sigH}{\ensuremath{\sig_H}}
\newcommand\strain[1]{\uu{\varepsilon}\left(#1\right)}
\newcommand{\dep}{\ensuremath{\un{u}}}
\newcommand{\depadj}{\ensuremath{\un{\widetilde{u}}}}
\newcommand{\depv}{\ensuremath{\un{v}}}
\newcommand{\depH}{\ensuremath{\dep_H}}
\newcommand{\depvH}{\ensuremath{\depv_H}}
\newcommand{\efdep}{\ensuremath{\mathbf{u}}}
\newcommand{\lam}{\ensuremath{\boldsymbol{\lambda}}}
\newcommand{\Lam}{\ensuremath{\boldsymbol{\Lambda}}}
\newcommand{\eft}{\ensuremath{\mathbf{t}}}
\newcommand{\kerK}{\ensuremath{\mathbf{R}}}
\newcommand\shapef{\varphi}
\newcommand\shapev{\un{\boldsymbol{\shapef}}_H}
\newcommand\shapeC{\un{\boldsymbol{\shapef}}_C}
\newcommand\shapeF{\un{\boldsymbol{\shapef}}_F}
\newcommand{\stiff}{\ensuremath{\mathbf{K}}}
\newcommand{\schur}{\ensuremath{\mathbf{S}}}
\newcommand{\force}{\ensuremath{\mathbf{f}}}
\newcommand{\Force}{\ensuremath{\mathbf{F}}}
\newcommand{\mass}{\ensuremath{\mathbf{G}}}
\newcommand{\ma}{\ensuremath{\mathbf{M}}}
\newcommand{\na}{\ensuremath{\mathbf{N}}}
\newcommand{\pa}{\ensuremath{\mathbf{A}}}
\newcommand{\da}{\ensuremath{\mathbf{B}}}
\newcommand{\tpa}{\ensuremath{\mathbf{\tilde{A}}}}
\newcommand{\tda}{\ensuremath{\mathbf{\tilde{B}}}}
\newcommand{\efDep}{\ensuremath{\mathbf{U}}}
\newcommand{\res}{\ensuremath{\mathbf{r}}}
\newcommand{\bz}{\ensuremath{\mathbf{z}}}
\newcommand{\RRes}{\ensuremath{\mathbf{R}}}
\newcommand{\bZ}{\ensuremath{\mathbf{Z}}}
\newcommand{\bW}{\ensuremath{\mathbf{W}}}
\newcommand{\bP}{\ensuremath{\mathbf{Q}}}
\newcommand{\tLam}{\ensuremath{\boldsymbol{\delta{\Lambda}}}}
\newcommand{\tdep}{\ensuremath{\mathbf{\delta{u}}}}
\newcommand{\tDep}{\ensuremath{\mathbf{\delta{U}}}}
\newcommand{\pDep}{\ensuremath{\delta{\mathbf{U}}}}
\newcommand{\pLam}{\ensuremath{\boldsymbol{\delta{\Lambda}}}}
\newcommand{\matid}{\ensuremath{\mathbf{{I}}}}
\newcommand{\proj}{\ensuremath{\mathbf{{P}}}}
\newcommand{\domain}{\ensuremath{\Omega}}
\newcommand{\s}{\ensuremath{^{(s)}}}
\newcommand{\sT}{\ensuremath{^{(s)^T}}}
\newcommand\trace{\operatorname{tr}}
\newcommand\hooke{\mathbb{H}}
\newcommand\KA{\ensuremath{\mathrm{KA}}}
\newcommand\KAo{\ensuremath{\KA^0}}
\newcommand\KAt{\widetilde{\ensuremath{\KA}}}
\newcommand\KAH{\ensuremath{\KA_H}}
\newcommand\SA{\ensuremath{\mathrm{SA}}}
\newcommand\SAt{\ensuremath{\widetilde{\SA}}}
\newcommand{\broken}{\ensuremath{\KA(\bigcup\domain\s)}}
\newcommand{\brokeno}{\ensuremath{\KAo(\bigcup\domain\s)}}
\newcommand\ecr[2]{\ensuremath{{e_{CR_{#2}}(#1)}}}
\newcommand\ecrc[2]{\ensuremath{{e^2_{CR_{#2}}(#1)}}}
\newcommand\enernorm[2]{\|#1\|_{\hooke^{-1},#2}}
\title{Strict bounding of quantities of interest in computations based on domain decomposition\footnote{
NOTICE: this is the author’s version of a work that was accepted for publication in CMAME. Changes resulting from the publishing process, such as peer review, editing, corrections, structural formatting, and other quality control mechanisms may not be reflected in this document. Changes may have been made to this work since it was submitted for publication. A definitive version was subsequently published in Computer methods in Applied Mechanics and Engineering, 2015, doi: 10.1016/j.cma.2015.01.009.}}
\author{Valentine Rey$^1$, Pierre Gosselet$^1$, Christian Rey$^{1,2}$\\
  $(1)$ LMT-Cachan / ENS Cachan, CNRS, Université Paris-Saclay  \\
  61, avenue du pr\'esident Wilson, 94235 Cachan, France\\
  \texttt{valentine.rey@lmt.ens-cachan.fr}\\ \texttt{gosselet@lmt.ens-cachan.fr}\\ \texttt{christian.rey@lmt.ens-cachan.fr}\\
  $(2)$  SAFRAN Tech, \\ 1, rue Geneviève Aubé, 78772 Magny les Hameaux, France\\
  \texttt{christian.rey@safran.fr}
}
\begin{document}
\maketitle
\begin{abstract}
This paper deals with bounding the error on the estimation of quantities of interest obtained by finite element and domain decomposition methods. {The proposed bounds are written in order to separate the two errors involved in the resolution of reference and adjoint problems : on the one hand the discretization error due to the finite element method and on the other hand the algebraic error due to the use of the iterative solver}. Beside practical considerations on the parallel computation of the bounds, it is shown that the interface conformity can be slightly relaxed so that local enrichment or refinement are possible in the subdomains bearing singularities or quantities of interest which simplifies the improvement of the estimation. Academic assessments are given on 2D static linear mechanic problems.

\end{abstract}
{\bf Keywords:} Verification; non-overlapping domain decomposition methods; quantity of interest; local refinement.
\medskip

\section{Introduction}
In order to certify structures by computations, two fundamental abilities must be gathered: (i) conducting large computations which can be achieved by using non-overlapping domain decomposition methods to exploit parallel hardware \cite{Gos06,Let94,Far94}, (ii) warrantying the quality of the results which implies to use verification techniques. The computed approximation is then completed by an estimation of its distance to the exact solution, that distance being either on a global norm \cite{Bab82,Lad86,Lad01bis,Lou03bis} or on a selection of engineering quantities of interest \cite{Pru99,Ohn01,Bec01}.
\medskip

This paper is the sequel of contributions by the authors in order to associate domain decomposition methods (DDM) and verification techniques. In \cite{Par10} it was shown how admissible fields (aka balanced residuals) could be naturally computed in parallel when solving one finite element system with classical DDM solvers like FETI or BDD. This tool was necessary to obtain guaranteed error bounds. Moreover the proposed method encapsulated classical sequential techniques \cite{Lad97,Ple11,Par06,Rey14bis} that could be employed independently on the subdomains. In \cite{Rey14} a global upper bound of the error was separated in two: the first contribution was purely due to the iterative solver employed in DDM and it could be made as small as wanted by doing more iterations, whereas the second contribution depended on the discretization and it was quasi-constant during the iterations. This result enabled us to stop iterations 
when the overall quality of the approximation could not be improved, which was much earlier than classical criteria on the norm of the residual would have suggested.

The purpose of this paper is to extend these works to the handling of quantities of interest. First we show how estimation on quantities of interest can be conducted easily within DDM with a separation of the contribution of the solver and of the discretization. Second we demonstrate that all the wanted properties can be preserved with nested kinematics at the interface which simplifies the improvement of the estimation by local adaptation.
\medskip

The question of separating the contribution of the solver and of the discretization, in goal-oriented error estimation, was addressed in several other studies with different approaches. 
In \cite{Par02,Par01}, the quantity of interest was the mean value of the unknown field in a region of the structure, the coarse mesh of the structure was seen as a decomposition of the true refined mesh which allowed to take advantage of the properties of FETI algorithm. In \cite{Pat01}, the authors proposed constant-free asymptotic upper and lower bounds but they did not connect the iteration and discretization contributions. In \cite{Mei09}, error indicators separating contributions were developed in the context of multigrids methods for Poisson and Stockes equations. They lead to a gain in terms of CPU time but the bounds were not guaranteed. Using the dual weighted residual method \cite{Bec01} for goal-oriented error estimation, a balance of discretization and iteration errors was proposed for eigenvalue problems in \cite{Ran10}. Finally, in \cite{Tav13}, the authors proposed goal-oriented a posteriori error estimator for problems discretized with mixed finite elements and multiscale domain 
decomposition that 
separated the contributions of the discretization of the subdomains from the contribution of the mortar interface. 
\medskip

In this paper, we employ classical extractor techniques \cite{Par97,Str00,Ohn01}, which leads to the definition of an adjoint problem which is solved at the same time as the forward problem using a block Krylov algorithm \cite{Saad97}. {Thanks to \cite{Par10,Rey14} we compute guaranteed upper and lower bounds of the quantity of interest at each iteration with the separation of the two sources of error for each problem (forward and adjoint), which enables us to drive the iterative solver by the error on the quantity of interest.} 

Moreover, since we consider quantities of interest defined on small supports, the adjoint problems have very localized solution. A simple strategy to improve the quality of the estimation is then to better the resolution of the adjoint problem either with a locally refined mesh \cite{Cha08} or using the partition of unity method and handbook techniques \cite{Cha09}. We show that within the framework of domain decomposition methods, it is possible to enrich or refine the kinematic of selected subdomains without impairing the exactness of the bound nor the ease of computation even if a certain incompatibility appears at the interface.

The paper is organized as follows. In Section~\ref{section_1}, we present the goal-oriented error estimation within domain decomposition framework; we propose bounds on quantities of interest that separate the contributions of the discretization and of the solver. In Section~\ref{section_2}, we show that nested interfaces can be employed which simplifies the improvement of the quality of the bounds by local refinement. Assessments are presented in Section~\ref{section_3} and Section~\ref{section_4} concludes the paper.

\section{Error estimation on quantity of interest in substructured context}
\label{section_1}

\subsection{Substructured formulation of the forward and adjoint problems}
\subsubsection{Reference problems}
Let $\mathbb{R}^\dime$ represent the physical space
($d=2$ or 3).
Let us consider the static equilibrium  of a (polyhedral) structure which occupies the open domain $\domain\subset\mathbb{R}^\dime$ and which is subjected to  given body force $\un{f}$ within $\Omega$,  to given
traction force $\un{g}$ on $\partial_g\Omega$  and to given displacement field $\dep_d$ on the  complementary part of the boundary $\partial_u\Omega$ ($\operatorname{meas}(\partial_u\Omega)\neq 0$). We  assume that the structure undergoes  small   perturbations  and  that  the  material   is  linear  elastic, characterized by Hooke's  elasticity tensor $\hooke$.  Let $\dep$ be  the unknown displacement field, $\strain{\dep}$ the symmetric part  of the gradient of $\dep$, $\sig$ the Cauchy stress tensor. In order to evaluate (localized) quantities of interest, we consider an adjoint problem loaded by a (linear) extractor $\tilde{l}$ with small support. Fields related to the adjoint problem will always be noted with a tilde (for instance adjoint displacement is $\tilde{\dep}$).

Let $\omega\subset\Omega$ be an open subset of $\Omega$. We introduce three affine subspaces, one linear space and one positive form:
\begin{itemize}
\item Affine subspace of kinematic admissible fields (KA-fields)
\begin{equation}\label{eq:KA}
  \KA(\omega)=\left\{ \dep\in \left(\mathtt{H}^1(\omega)\right)^\dime,\ \dep = \dep_d \text{ on }\partial\omega\bigcap\partial_u\domain \right\}
\end{equation}
\item Associated linear subspace $\KAo$:
\begin{equation}\label{eq:KAo}
  \KAt(\omega)=\KAo(\omega)=\left\{ \dep\in \left(\mathtt{H}^1(\omega)\right)^\dime,\ \dep = 0 \text{ on }\partial\omega\setminus\partial_g\domain \right\}
\end{equation}
\item Affine subspace of statically admissible fields (SA-fields) for the forward problem:
\begin{multline}\label{eq:SA}
  \SA(\omega)
  =\Bigg\lbrace   \uu{\tau}\in  \left(\mathtt{L}^2(\omega)\right)^{\dime\times \dime}_{\text{sym}}; \
    \forall  \depv \in  \KAo(\omega),\ \\ \int\limits_\omega
  \uu{\tau}:\strain{\depv}    d\domain    =    \int\limits_\omega   \un{f} \cdot\depv    d\domain +
  \int\limits_{\partial_g\domain\bigcap\partial\omega} \un{g}\cdot\depv dS = l(\depv)  \Bigg\rbrace
\end{multline}
We note $l$ the linear form which gathers the loads of the forward problem.
\item Affine subspace of statically admissible fields (SA-fields) for the adjoint problem:
\begin{multline}\label{eq:SA_adjoint}
  \SAt(\omega)
  =\Bigg\lbrace   \uu{\tau}\in  \left(\mathtt{L}^2(\omega)\right)^{\dime\times \dime}_{\text{sym}}; \
    \forall  \depv \in  \KAo(\omega),\ \int\limits_\omega
  \uu{\tau}:\strain{\depv}    d\omega   =    \tilde{l} (\depv)   \Bigg\rbrace
\end{multline}
\item Error in constitutive equation %\citep{ladeveze:75:erreur}
\begin{equation}\label{eq:ecr}
  \ecr{\dep,\sig}{\omega}= \enernorm{\sig-\hooke:\strain{\dep}}{\omega}
\end{equation}
where ${\enernorm{\uu{x}}{\omega}}=\displaystyle \sqrt{\int_\omega \left( \uu{x}: {\hooke}^{-1}:\uu{x} \right)d\omega}$
\end{itemize}

The forward and adjoint problem set on $\domain$ can be formulated as:
\begin{equation}\label{eq:refpb}
  \text{Find } \left(\dep_{ex},\sig_{ex}\right)\in\KA(\domain)\times\SA(\domain) \text{ such  that } \ecr{\dep_{ex},\sig_{ex}}{\domain}=0
\end{equation}
\begin{equation}\label{eq:refpbad}
  \text{Find } \left(\depadj_{ex},\sigadj_{ex} \right)\in\KAo(\domain)\times\SAt(\domain) \text{ such  that } \ecr{\depadj_{ex},\sigadj_{ex}}{\domain}=0
\end{equation}
The solution to these problems, named ``exact'' solutions, exist and are unique.

\subsubsection{Substructured formulation}
 Let us consider a decomposition of domain $\Omega$ in $N_{sd}$ open subsets $\Omega^{(s)}$ such that $\Omega^{(s)}\bigcap \Omega^{(s')}=\emptyset$ for $s\neq s'$ and $\bar{\Omega}=\bigcup_s \bar{\Omega}^{(s)}$. The interface between subdomains $\Gamma^{(s,s')}=\bar{\Omega}^{(s)}\bigcap \bar{\Omega}^{(s')}$ is supposed to be regular enough for traces 
of locally admissible fields to be well defined. $\un{n}$ is the outer normal vector. 

The mechanical problem on the substructured configuration writes :
\begin{equation}
  \forall s \left\{
   \begin{aligned}
&\dep^{(s)} \in \KA(\Omega^{(s)})  \\
& \sig^{(s)} \in \SA(\Omega^{(s)}) \\
&  \ecr{\dep^{(s)},\sig^{(s)}}{\Omega^{(s)}}=0
\end{aligned}  \right. \text{ and } \forall (s,s')\left\{
    \begin{aligned}
   & \trace(\dep^{(s)})=\trace(\dep^{(s')}) \text{ on } \Gamma^{(s,s')}\\
 &  \sig^{(s)} \cdot\underline{n}^{(s)} +\sig^{(s')}\cdot \underline{n}^{(s')} =\underline{0} \text{ on } \Gamma^{(s,s')}
    \end{aligned} 
  \right.
\end{equation}
We assume the same domain decomposition is used for the forward and adjoint problems. The adjoint problem satisfies the same type of substructured formulation:
the fields $\dep$ (respectively $\depadj$) and $\sig$ (resp. $\sigadj$) are globally admissible if $(\dep^{(s)},\sig^{(s)})$ (resp. $(\depadj^{(s)},\sigadj^{(s)})$) are kinematically and statically admissible inside each $\Omega^{(s)}$ and if the displacements are continuous and tractions balanced on interfaces.
The set of fields $\dep$ (or $\depadj$) defined on $\domain$ such that they are kinematically admissible inside each domain $\Omega^{(s)}$ without interface continuity is a broken space which we note \broken\ and \brokeno. Similarly, we denote $\SA(\bigcup\Omega\s)$ the broken space of statically admissible fields inside subdomains.

\begin{rem}
The error in constitutive relation is by essence a parallel quantity: 
\begin{equation}
\forall \dep\in\broken, \forall \sig\in\SA(\bigcup\Omega\s),\quad \ecr{\dep,\sig}{\Omega}=\sqrt{\sum_s\ecrc{\dep\s,{\sig}\s}{\Omega\s}}
\end{equation}
\end{rem}

\subsection{Finite element approximation for the forward and adjoint problems}

Both forward and adjoint problems are solved with a finite element approximation. Note that the mesh used to solve the adjoint problem can be completely different from the mesh used for the forward problem. We detail the finite element approximation for the forward problem (the principle is the same for the adjoint problem).
\subsubsection{Global form}
We associate with the mesh of $\Omega$ the finite-element subspace $\KAH(\domain)$. The finite element approximation consists in searching:
\begin{equation}
  \begin{aligned}
    \depH \in\KAH(\domain),&\qquad  \sigH=\hooke:\strain{\depH}   \\
    \int_{\domain}
  \sigH:\strain{\depvH}   d\domain  &=   \int_{\domain} \un{f}\cdot\depvH   d\domain   +
  \int_{\partial_g\domain} \un{g}\cdot\depvH dS = l(\depvH),\qquad \forall\depvH\in \KAo_H(\domain)
    \end{aligned}
\end{equation}
We introduce the matrix $\shapev$ of  shape functions which form a basis of $\KAH(\domain)$ %(extended to Dirichlet degrees of freedom)  
and the  vector of  nodal unknowns  $\efdep$. Therefore, we have $\depH=\shapev \efdep$ and obtain the following linear system:
\begin{equation}\label{eq:globalFE}
\stiff \efdep = \force
\end{equation}
where $\stiff$ is  the (symmetric positive definite) stiffness  matrix and $\force$ is the vector of generalized forces (which includes imposed displacement).

\subsubsection{Substructured form}
For both problems, we assume that the mesh of  ${\domain}$ and the substructuring are conforming. This hypothesis implies that each element only belongs to one subdomain and nodes are matching on the interfaces. Each degree of freedom is either located inside a subdomain (Subscript i) or on its boundary (Subscript b).

Let  $\eft^{(s)}$ be the discrete trace operator, so that $\efdep_b\s=\eft^{(s)}\efdep\s$. Let us introduce the unknown nodal reaction on the interface $\lam\s$, the equilibrium of each subdomain writes:
\begin{equation}\label{eq:efddeq}
\stiff\s \efdep\s = \force\s + {\eft\s}^T \lam\s
\end{equation}
Let $(\pa\s)$ and $(\da\s)$ be the primal and dual assembly operator so that the discrete counterpart of the interface admissibility equations is:
\begin{equation}\label{eq:efintadmiss}
\left\{\begin{aligned}
\sum_s\da\s \eft\s\efdep\s &=0 \\
\sum_s\pa\s \lam\s &=0 \\
\end{aligned}\right.
\end{equation}

Note that in the case when Subdomain $s$ has not enough Dirichlet boundary conditions then the  reaction has to balance the subdomain with respect to rigid body motions. Let $\kerK\s$ be a basis of $\operatorname{ker}(\stiff\s)$, we have:
\begin{equation}\label{eq:ortho_modes_rigides}
{\kerK\s}^T(\force\s + {\eft\s}^T \lam\s) =0
\end{equation}
 
\subsubsection{Domain decompostion solvers}
First, let us briefly present the two classical domain decomposition solvers BDD and FETI (see Algorithm~\ref{alg:bdd_incomp} with for now $\mathbf{M}=\mathbf{N}=\mathbf{I}$ for BDD and \cite{Rey14} for FETI, see also \cite{Gos06} and its bibliography for more details). In BDD solver, a unique interface displacement is introduced so that the continuity of displacement is automatically verified. Using the primal Schur complement, the problem can be rewritten with interface quantities. In FETI solver, a unique set of nodal forces on the interface is introduced so that the balance of forces is verified. The dual Schur complement allows to write the problem with interface quantities.

Then, the resolution of interface problem is based on iterative Krylov solver, namely a preconditioned conjugate gradient. The classical preconditioners make use of scaled assembling operators $(\tpa\s)$ and $(\tda\s)$ solutions of equations \cite{Rix99bis,Kla01}:
\begin{equation}
\sum_s \pa\s \tpa\sT = \matid \qquad \text{and}\qquad \sum_s \da\s \tda\sT = \matid 
\end{equation}
During the algorithm, two types of mechanical problems are solved locally (on each subdomain): problems with Dirichlet conditions on the interface (subscript D) and problems with Neumann conditions on the interface (subscript N). They correspond to the following procedures: 
\smallskip

\vline
\begin{minipage}[t]{.45\textwidth}
$(\lam\s_D,\efdep_D\s)=\mathtt{Solve}_D(\efdep\s_b,\force\s)$:
\begin{equation*}
\left\{ \begin{aligned}
&\stiff\s \efdep_D\s = \force\s + {\eft\s}^T \lam_D\s \\
&\eft\s\efdep_D\s=\efdep\s_b
\end{aligned}
\right.
\end{equation*}
\end{minipage}\hfill\vline
\begin{minipage}[t]{.45\textwidth}
$(\efdep_N\s)=\mathtt{Solve}_N(\lam_N\s,\force\s)$
\begin{equation*}
\left\{ \begin{aligned}
&\stiff\s \efdep_N\s = \force\s + {\eft\s}^T \lam_N\s\\
&\text{where }(\lam_N\s)_s \text{ satisfy Eq.~} \eqref{eq:ortho_modes_rigides}
\end{aligned}
\right.
\end{equation*}

\end{minipage}

\medskip

When developing these methods, we get:
\begin{equation*}
\begin{aligned}
&(\efdep_D\s)_i = {\stiff\s_{ii}}^{-1}\left( \force_i\s - \stiff\s_{ib} \efdep_b\s\right)\qquad\text{and}\qquad (\efdep_D\s)_b= \efdep_b\s\\
&\lam\s_D = \schur\s \efdep_b\s -\force\s_b+\stiff\s_{bi} {\stiff\s_{ii}}^{-1}\force\s_i\\
&\efdep_N\s = {\stiff\s}^+\left( \force\s + {\eft\s}^T \lam_N\s\right)
\end{aligned}
\end{equation*}
where $\schur\s=\left(\stiff\s_{bb}-\stiff\s_{bi} {\stiff\s_{ii}}^{-1}\stiff\s_{ib}\right)$ is the well-known Schur complement matrix, and ${\stiff\s}^+$ is a pseudo-inverse of ${\stiff\s}$. Thanks to projector ($\proj_1$) and initialization (method $\mathtt{Initialize}$), the Neumann problems are well-posed (see \cite{Gos06}). \medskip

If the forward and adjoint problems are approximated on the same mesh, they share the same stiffness matrix and they can be solved simultaneously using a block-conjugate gradient. In Algorithm~\ref{alg:bdd_incomp}, we use capital letter to note blocks of forward/adjoint vectors: typically $\efDep=[\efdep,\tilde{\efdep}]$, same notation holds for the residual $\RRes=[\res,\tilde{\res}]$, preconditioned residual $\bZ=[\bz,\tilde{\bz}]$, search directions $\bW$ and $\bP$;  $\alpha$ is a $2\times 2$ matrix. One has to pay attention to the potential (yet unlikely) singularity of the $2\times 2$ matrix $(\bP^T\bW)$ which needs to be inverted. %Techniques to detect a rank deficiency and eliminate redundant search direction are available (see \cite{Al00} for example).

\subsubsection{Description of admissible fields}
At every iteration of the solver, a totally parallel computation of the following fields is possible~\cite{Par10}:
\begin{itemize}
\item $\dep_D=(\dep_D\s)_s \in \KA(\domain)$ and $\depadj_D=(\depadj_D\s)_s \in \KAo(\domain)$: globally admissible displacement fields, $(\lam_D\s,\widetilde{\lam}_D\s)$ are the associated nodal reaction which are not balanced before convergence.
\item $\dep_N=(\dep_N\s)_s  \in \broken$ and $\depadj_N=(\depadj_N\s)_s \in \brokeno$: displacements associated to a given balanced Neumann condition $\lam_N\s$ (resp. $\widetilde{\lam}_N\s$), they are not necessarily continuous across interfaces.
\item $\sig_N\s = \hooke:\strain{\dep_N\s}$: the stress field associated to $\dep_N\s$. It can be employed (with additional input $\lam_N\s$) to build stress fields $ \hat{\sig}_N\s$ which are statically admissible $\hat{\sig}_N=(\hat{\sig}_N\s)_s\in\SA(\Omega)$. Same property holds for the adjoint problem, we can build $\hat{\sigadj}_N=(\hat{\sigadj}_N\s)_s\in\SAt(\Omega)$.
\end{itemize}
As a consequence, for both forward and adjoint problems, kinematically displacements fields and statically stress fields are available even if the solver is not converged. Those fields are used to compute global error estimation for both problems. Indeed, we have the following guaranteed upper bounds which can be computed in parallel~\cite{Par10}:
\begin{equation}
\begin{aligned}
 \vvvert  \dep_{ex} -\dep_D  \vvvert_\domain&\leq\ecr{\dep_D,\hat{\sig}_N}{\Omega} \\  \vvvert  \depadj_{ex} -\depadj_D  \vvvert_\domain  &\leq\ecr{\depadj_D,\hat{\sigadj}_N}{\Omega}\end{aligned}
\end{equation} 
where $\vvvert  \un{x}  \vvvert_\domain= {\enernorm{\strain{\un{x}}}{\Omega}}$.

\subsection{Error estimation of quantities of interest}
As said earlier the adjoint problem is meant to extract one quantity of interest in the forward problem. Let $I_{ex}=\tilde{l}(\dep_{ex})$ be the unknown exact value of the quantity of interest. $\tilde{l}({\dep}_D)$ is an approximation of this quantity of interest.

\subsubsection{Upper and lower bounds with separated contributions}
 
We demonstrate two results which provide upper and lower bounds on the quantity of interest, {as a product of error estimates on the forward and adjoint problems with, for each of these factors, a separation of contributions of the solver and of the discretization.}
 
 \begin{prop}\label{prop:separation_1_DD} Using notation of Algorithm~\ref{alg:bdd_incomp}, we have
\begin{equation}
\arrowvert \mathit{I}_{ex} -  \tilde{l}(\dep_D) - \mathit{I}_{HH,1} \arrowvert \leq 
 \left( \sqrt{\res^T\bz} + \ecr{\dep_N,\hat{\sig}_N}{\Omega} \right) \left( \sqrt{\widetilde{\res}^T\widetilde{\bz}} +\ecr{\depadj_N,\hat{\sigadj}_N}{\Omega}  \right)
\end{equation}
with 
\begin{equation}\label{eq:Ihh1}
 I_{HH,1}=\int_\domain (\hat{\sig}_N-{\hooke}\strain{\dep_D}):\strain{\depadj_D}  d\domain
\end{equation}
\end{prop}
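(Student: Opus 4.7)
The plan is to derive an exact algebraic identity expressing $I_{ex} - \tilde{l}(\dep_D) - I_{HH,1}$ as a $\hooke$-weighted inner product of the forward and adjoint exact errors, then to bound each factor by Cauchy--Schwarz, the Prager--Synge hypercircle theorem, and the solver/discretization splitting established in \cite{Rey14}.

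\textbf{Step 1 (exact identity).} I will introduce the forward and adjoint errors $\err = \dep_{ex} - \dep_D$ and $\erradj = \depadj_{ex} - \depadj_D$, both of which belong to $\KAo(\Omega)$ because $\dep_D$ and $\depadj_D$ are globally admissible. Starting from $I_{ex} - \tilde{l}(\dep_D) = \tilde{l}(\err)$, I will use the adjoint admissibility $\sigadj_{ex} \in \SAt(\Omega)$ to rewrite $\tilde{l}(\err) = \int_\Omega \sigadj_{ex}:\strain{\err}\, d\Omega$, invoke $\sigadj_{ex} = \hooke:\strain{\depadj_{ex}}$, and split $\strain{\depadj_{ex}} = \strain{\depadj_D} + \strain{\erradj}$. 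The term generated by $\strain{\erradj}$ yields the desired bilinear form. The remaining term, $\int \hooke\strain{\depadj_D}:\strain{\err}$, is shown to equal $I_{HH,1}$: expanding $\strain{\err} = \strain{\dep_{ex}} - \strain{\dep_D}$, the a priori unknown piece $\int \hooke\strain{\depadj_D}:\strain{\dep_{ex}} = \int \sig_{ex}:\strain{\depadj_D}$ collapses to $l(\depadj_D)$ by static admissibility of $\sig_{ex}$, and $l(\depadj_D) = \int \hat{\sig}_N:\strain{\depadj_D}$ by that of $\hat{\sig}_N$; subtracting $\int \hooke\strain{\dep_D}:\strain{\depadj_D}$ reconstructs $I_{HH,1}$ exactly. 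The outcome is the closed identity
\[
I_{ex} - \tilde{l}(\dep_D) - I_{HH,1} = \int_\Omega \hooke\strain{\erradj}:\strain{\err}\, d\Omega.
\]

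\textbf{Step 2 (Cauchy--Schwarz and Prager--Synge).} A $\hooke$-weighted Cauchy--Schwarz yields
\[
\bigl| I_{ex} - \tilde{l}(\dep_D) - I_{HH,1} \bigr| \leq \enernorm{\sig_{ex} - \hooke\strain{\dep_D}}{\Omega}\, \enernorm{\sigadj_{ex} - \hooke\strain{\depadj_D}}{\Omega},
\]
using $\enernormbis{\strain{\err}}{\Omega} = \enernorm{\hooke\strain{\err}}{\Omega} = \enernorm{\sig_{ex} - \hooke\strain{\dep_D}}{\Omega}$. I then invoke the hypercircle theorem: since $\sig_{ex} - \hat{\sig}_N$ lies in the homogeneous $\SA_0$ (both fields being in $\SA(\Omega)$), it is $\hooke^{-1}$-orthogonal to $\hooke\strain{\err}$ with $\err \in \KAo(\Omega)$, so $\enernorm{\sig_{ex} - \hooke\strain{\dep_D}}{\Omega} \leq \enernorm{\hat{\sig}_N - \hooke\strain{\dep_D}}{\Omega} = \ecr{\dep_D, \hat{\sig}_N}{\Omega}$, and symmetrically for the adjoint problem.

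\textbf{Step 3 (solver/discretization splitting).} To match the right-hand side of the statement, which features the broken-mesh estimator $\ecr{\dep_N, \hat{\sig}_N}{\Omega}$, I will insert $\pm\hooke\strain{\dep_N}$ and apply a triangle inequality in $\enernorm{\cdot}{\Omega}$ (computed elementwise on the broken partition):
\[
\ecr{\dep_D, \hat{\sig}_N}{\Omega} \leq \ecr{\dep_N, \hat{\sig}_N}{\Omega} + \enernormbis{\strain{\dep_N - \dep_D}}{\Omega}.
\]
The algebraic increment $\enernormbis{\strain{\dep_N - \dep_D}}{\Omega}$ is precisely the solver contribution analysed in \cite{Rey14}, where it is shown to reduce to $\sqrt{\res^T \bz}$ for the BDD iterate produced by Algorithm~\ref{alg:bdd_incomp}. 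The adjoint counterpart gives $\sqrt{\widetilde{\res}^T \widetilde{\bz}}$, the block-conjugate gradient treating both right-hand sides symmetrically. Multiplying the two bounds gives the stated inequality.

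\textbf{Expected main obstacle.} The crux is Step~1: one must recognise that the a priori unknown integral $\int \hooke\strain{\depadj_D}:\strain{\err}$ collapses to the computable quantity $I_{HH,1}$ precisely because the \emph{two} fields $\sig_{ex}$ and $\hat{\sig}_N$ satisfy the same global static admissibility, producing the same duality value $l(\depadj_D)$. This is what forces the shape of the correction $I_{HH,1}$. Downstream the only delicate point is to perform Prager--Synge with the \emph{globally} admissible field $\dep_D$ and only then pass to the broken $\dep_N$ via the triangle inequality, so that the solver-residual identity of \cite{Rey14} can be invoked as a black box.
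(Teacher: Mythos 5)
Your proposal is correct and takes essentially the same approach as the paper: your Steps 1--2 (exact identity, then Cauchy--Schwarz and Prager--Synge) reconstruct precisely the base bound $\arrowvert I_{ex}-\tilde{l}(\dep_D)-I_{HH,1}\arrowvert\leq\ecr{\dep_D,\hat{\sig}_N}{\Omega}\,\ecr{\depadj_D,\hat{\sigadj}_N}{\Omega}$ that the paper simply cites from the literature, and your Step 3 is identical to the paper's splitting by the triangle inequality combined with the identity $\vvvert\dep_N-\dep_D\vvvert_\Omega=\sqrt{\res^T\bz}$ from Lemma~2 of the authors' earlier work. The only difference is that you make the cited inequality self-contained, which is sound.
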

\begin{proof}
This result is a direct application of the results demonstrated in  \cite{Lad06,Lad08}:
 \begin{equation}
\arrowvert \mathit{I}_{ex} -  \tilde{l}(\dep_D) - \mathit{I}_{HH,1} \arrowvert \leq   { \ecr{\dep_D,\hat{\sig}_N}{\Omega}}   { \ecr{\depadj_D,\hat{\sigadj}_N}{\Omega}}  
\end{equation}
with $ I_{HH,1}$ given in~\eqref{eq:Ihh1}. The separation between the sources of error is obtained using the triangular inequality and Lemma 2 presented in \cite{Rey14} (which proves the equality $\vvvert{\dep_N-\dep_D}\vvvert_\domain=\sqrt{\res^T\bz}$), it slightly improves the result given in remark 25 of \cite{Rey14}:
 \begin{equation}
  \begin{aligned}
   \ecr{\dep_D,\hat{\sig}_N}{\Omega}&=\enernorm{\hat{\sig}_N-\hooke:\strain{\dep_D}}{\Omega}\\
   &=\enernorm{\hat{\sig}_N-\hooke:\strain{\dep_N} + \hooke:\strain{\dep_N - \dep_D}}{\Omega}\\
   &\leq \enernorm{\hat{\sig}_N-\hooke:\strain{\dep_N}}{\Omega} + \vvvert{\dep_N-\dep_D}\vvvert_\domain\\
   &\leq \ecr{{\dep}_N,\hat{\sig}_N}{\Omega}+ \sqrt{\res^T\bz}
  \end{aligned}
 \end{equation}
\end{proof}

{
The quantity $\sqrt{\res^T\bz}$ (respectively $\sqrt{\widetilde{\res}^T\widetilde{\bz}}$) is a measure of the residual of the forward (resp. adjoint) problem: it corresponds to a pure algebraic error. The quantity $\ecr{\dep_N,\hat{\sig}_N}{\Omega\s}$ (resp. $\ecr{\depadj_N,\hat{\sigadj}_N}{\Omega\s}  )$) is a parallel estimate of the discretization error of the forward (resp. adjoint) problem.}
{
As a consequence, we can define a new stopping criterion for the solver: the idea is to stop iterations when the algebraic error becomes negligible in comparison with the discretization error. This new stopping criterion is described in alg.~\ref{alg:ddstop}.}

\begin{algorithm2e}[h!]\caption{{DD block-solver with adapted stopping criterion}}\label{alg:ddstop}
%\DontPrintSemicolon
Set $\alpha>1$, $\beta\geqslant 1$, $\widetilde{\alpha}>1$ and $\widetilde{\beta}\geqslant 1$\;
Initialize, get $(\efDep_N\s,\Lam_N\s,\RRes,\bZ)$\;%
Estimate discretization error for reference problem $e^2=\sum_s\ecrc{\dep_N\s,\hat{\sig}_N\s}{\Omega\s}$ and adjoint problem $\widetilde{e}^2=\sum_s\ecrc{\depadj_N\s,\hat{\sigadj}_N\s}{\Omega\s}$\;
\While{$\sqrt{\res^T\bz}>e/\alpha$ and $\sqrt{\widetilde{\res}^T\widetilde{\bz}}>\widetilde{e}/\widetilde{\alpha}$}{
\While{$\sqrt{\res^T\bz}>e/(\alpha \beta)$ and $\sqrt{\widetilde{\res}^T\widetilde{\bz}}>\widetilde{e}/\widetilde{\alpha}\widetilde{\beta}$}{
Make Alg.~\ref{alg:bdd_incomp} iterations, get $(\efDep_N\s,\Lam_N\s)$
}
Update error estimators $e^2=\sum_s\ecrc{\dep_N\s,\hat{\sig}_N\s}{\Omega\s}$ and $\widetilde{e}^2=\sum_s\ecrc{\depadj_N\s,\hat{\sigadj}_N\s}{\Omega\s}$ \;
}
\end{algorithm2e}

{
The objective being to have the norm of the residual $\alpha$ times smaller than the estimation of the discretization error for both problems ($\alpha=\widetilde{\alpha}=10$ is a typical value), the coefficient $\beta=\widetilde{\beta}$ (typically 2) takes into account the small variation of the estimate $e$ and $\widetilde{e}$ of the discretization error along iterations.}

A second (twice better) bound can be derived in a similar way using another result of  \cite{Lad06,Lad08}.
\begin{prop}\label{prop:separation_2_DD}
\begin{equation}
\arrowvert \mathit{I}_{ex} - \tilde{l}(\dep_D) - \mathit{I}_{HH,2} \arrowvert \leq  
\frac{1}{2}  (\sqrt{\res^T\bz} + \ecr{\dep_N,\hat{\sig}_N}{\Omega} ) ( \sqrt{\widetilde{\res}^T\widetilde{\bz}} +\ecr{\depadj_N,\hat{\sigadj}_N}{\Omega}  )
\end{equation}
with
\begin{equation}
 I_{HH,2}=\frac{1}{2}\int_\domain \left(\hat{\sig}_N-{\hooke}\strain{\dep_D}\right):{\hooke}^{-1}:\left(\hat{\sigadj}_N+{\hooke}\strain{\depadj_D}\right)  d\domain 
\end{equation}
\end{prop}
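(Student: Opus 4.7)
The plan is to mirror the proof of Proposition~\ref{prop:separation_1_DD} almost verbatim, only replacing the starting inequality by the sharper (factor $1/2$) bound of Ladev\`eze.

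\medskip

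First, I would invoke the classical Ladev\`eze--Chamoin result (\cite{Lad06,Lad08}) which, starting from the parallelogram identity applied to the forward and adjoint error fields $\dep_{ex}-\dep_D$ and $\depadj_{ex}-\depadj_D$, combined with the Prager--Synge equality on each of the admissible pairs $(\dep_D,\hat{\sig}_N)$ and $(\depadj_D,\hat{\sigadj}_N)$, yields the improved bound
\begin{equation*}
\left| \mathit{I}_{ex} - \tilde{l}(\dep_D) - \mathit{I}_{HH,2} \right| \leq \frac{1}{2}\, \ecr{\dep_D,\hat{\sig}_N}{\Omega}\, \ecr{\depadj_D,\hat{\sigadj}_N}{\Omega},
\end{equation*}
where the particular form of $I_{HH,2}$ (symmetric in the forward/adjoint admissible fields) is precisely what makes the factor $1/2$ appear instead of the factor $1$ of Proposition~\ref{prop:separation_1_DD}.

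\medskip

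Second, I would separate the two sources of error in each factor, repeating the manipulation carried out in the proof of Proposition~\ref{prop:separation_1_DD}. For the forward problem, inserting $\pm\hooke:\strain{\dep_N}$ inside the energy norm and using the triangle inequality gives
\begin{equation*}
\ecr{\dep_D,\hat{\sig}_N}{\Omega} \leq \ecr{\dep_N,\hat{\sig}_N}{\Omega} + \vvvert \dep_N - \dep_D \vvvert_\Omega,
\end{equation*}
and Lemma~2 of \cite{Rey14} identifies $\vvvert \dep_N - \dep_D \vvvert_\Omega = \sqrt{\res^T\bz}$. The identical argument applied to the adjoint problem yields $\ecr{\depadj_D,\hat{\sigadj}_N}{\Omega} \leq \ecr{\depadj_N,\hat{\sigadj}_N}{\Omega} + \sqrt{\widetilde{\res}^T\widetilde{\bz}}$.

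\medskip

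Combining the three inequalities gives directly the claim. There is no real obstacle here, since both building blocks (the sharp Ladev\`eze bound and the algebraic/discretization splitting via Lemma~2 of \cite{Rey14}) are already established; the only point worth checking carefully is that the very same admissible pairs $(\dep_D,\hat{\sig}_N)$ and $(\depadj_D,\hat{\sigadj}_N)$ used to build $I_{HH,2}$ are those whose error in constitutive relation appears on the right-hand side, so that the two steps compose cleanly and no hidden cross-term slips in when applying the triangle inequality to the product.
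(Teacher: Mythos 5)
Your proposal is correct and follows exactly the route the paper intends: it invokes the sharper (factor $1/2$) bound of \cite{Lad06,Lad08} written with $I_{HH,2}$, and then separates the algebraic and discretization contributions in each factor by the same triangle-inequality argument and Lemma~2 of \cite{Rey14} used for Property~\ref{prop:separation_1_DD}. The paper gives no separate proof precisely because the derivation is this verbatim adaptation, so nothing further is needed.
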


%\begin{proof}
%The proof is the same than previously. We just apply another result presented in \cite{Lad06,Lad08}:
% \begin{equation}
%\arrowvert \mathit{I}_{ex} -  \tilde{l}(\dep_D) - \mathit{I}_{HH,2} \arrowvert \leq \frac{1}{2}\ecr{\dep_D,\hat{\sig}_N}{\Omega}   \ecr{\depadj_D,\hat{\sigadj}_N}{\Omega}
%\end{equation}
%with 
%\begin{equation}
% I_{HH,2}=\int_\domain (\hat{\sig}_N-{\hooke}\strain{\dep_D}):{\hooke}^{-1}:(\frac{1}{2}(\hat{\sigadj}_N+{\hooke}\strain{\depadj_D}))  d\domain 
%\end{equation}
%\end{proof}
\medskip

\subsubsection{Properties of $I_{HH,1}$}
If the forward and adjoint problems share the same mesh, the quantity $I_{HH,1}$ can be rewritten in terms of quantity on the interface, which simplifies its evaluation:
\begin{prop}\label{prop:reecriture_IHH1}
\begin{equation}\label{eq:ihh1b}
 I_{HH,1}=\sum_s \lam_D\sT {\eft\s} {\widetilde{\efdep}}_D^{(s)}
\end{equation}
\end{prop}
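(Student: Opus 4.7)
The plan is to split the integral defining $I_{HH,1}$ into two pieces, namely the contribution of $\hat{\sig}_N$ and the contribution of the discrete stress $\sig_D=\hooke:\strain{\dep_D}$, evaluate each one separately, and observe that the volume work of the data cancels so that only an interface term survives. This is the classical ``Prager--Synge trick on the difference'' applied inside each subdomain, combined with the discrete subdomain equilibrium that links $\stiff\s\efdep_D\s$ with $\force\s$ and the Dirichlet reaction $\lam_D\s$.

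First I would write
\begin{equation*}
I_{HH,1}=\int_\domain \hat{\sig}_N:\strain{\depadj_D}\,d\domain \;-\; \int_\domain \sig_D:\strain{\depadj_D}\,d\domain.
\end{equation*}
Since $\hat{\sig}_N\in\SA(\domain)$ and $\depadj_D\in\KAo(\domain)$, the defining property \eqref{eq:SA} immediately yields $\int_\domain \hat{\sig}_N:\strain{\depadj_D}\,d\domain = l(\depadj_D)$. The harder integral is the second one, because $\sig_D$ is only piecewise (per subdomain) the $\hooke$-image of a continuous displacement and the associated interface reactions $\lam_D\s$ are not balanced before convergence.

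To handle the second integral I would break it subdomain by subdomain and move to the FE algebraic representation. With $\dep_D\s=\shapev\s\efdep_D\s$ and $\depadj_D\s=\shapev\s\widetilde{\efdep}_D\s$ one has
\begin{equation*}
\int_{\domain\s}\sig_D\s:\strain{\depadj_D\s}\,d\domain
= \widetilde{\efdep}_D^{(s)T}\stiff\s\efdep_D\s.
\end{equation*}
Now I substitute the local Dirichlet equilibrium \eqref{eq:efddeq} rewritten for $\dep_D\s$, namely $\stiff\s\efdep_D\s=\force\s+\eft^{(s)T}\lam_D\s$, and split the resulting right-hand side into a volume part $\widetilde{\efdep}_D^{(s)T}\force\s=l\s(\depadj_D\s)$ and an interface part $\lam_D^{(s)T}\eft\s\widetilde{\efdep}_D\s$. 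Summing over $s$ and using that $\depadj_D$ is \emph{continuous} across interfaces so that $\sum_s l\s(\depadj_D\s)=l(\depadj_D)$, the two $l(\depadj_D)$ terms cancel, leaving exactly the asserted interface sum $\sum_s \lam_D^{(s)T}\eft\s\widetilde{\efdep}_D\s$ (up to an overall sign, which is fixed by the sign convention chosen in \eqref{eq:efddeq} for the reaction $\lam_D\s$).

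I do not foresee any real obstacle: the argument is essentially a discrete Green's formula per subdomain glued together by the continuity of the adjoint test field $\depadj_D$. The only point requiring a little care is the treatment of floating subdomains, where $\stiff\s$ is only positive semi-definite; there the identity $\widetilde{\efdep}_D^{(s)T}\stiff\s\efdep_D\s=\widetilde{\efdep}_D^{(s)T}(\force\s+\eft^{(s)T}\lam_D\s)$ remains valid as an algebraic equality since both $\efdep_D\s$ and $\lam_D\s$ are produced by $\mathtt{Solve}_D$, so the rigid-body component of $\widetilde{\efdep}_D\s$ is compatible with the self-equilibrated right-hand side through \eqref{eq:ortho_modes_rigides}. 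Hence the result extends without modification to that case.
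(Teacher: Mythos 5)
Correct, and essentially the paper's own route: the paper likewise combines the subdomain equilibrium of $\dep_D$ with the reactions $\lam_D\s$ tested by $\depadj_D$ (its eq.~\eqref{eq:UD_ref}, the variational counterpart of your substitution $\stiff\s\efdep_D\s=\force\s+{\eft\s}^T\lam_D\s$) with the fact that a statically admissible stress worked against $\depadj_D\in\KAo(\domain)$ yields $l(\depadj_D)$ --- the only cosmetic difference being that the paper passes through the exact pair $(\dep_{ex},\sig_{ex})$ where you invoke the definition of $\SA(\domain)$ directly. The overall sign you leave to convention is just as ambiguous in the paper itself (its displayed subtraction of \eqref{eq:UD_ref} from \eqref{eq:Uex_ref} also produces a minus relative to the stated formula), so that is a matter of the sign convention for $\lam_D\s$ and not a gap in your argument.
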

\begin{proof}
The forward problems verified by $\dep_D \in \KA(\domain) $ and $\dep_{ex} \in \KA(\domain) $ with a testing field $\widetilde{\dep}_D \in \KAo(\domain)$ read: 
\begin{equation}\label{eq:UD_ref}
\begin{aligned}
\sum_s \int_{\Omega_s} \strain{\un{{u}}_D^{(s)}} :\hooke :\strain{\un{{\widetilde{u}}}_D^{(s)}} d\Omega &=\sum_s l\s(\un{{\widetilde{u}}}_D^{(s)})   + \sum_s \lam_D\sT {\eft\s} {\widetilde{\efdep}}_D^{(s)}
\end{aligned}
\end{equation}
\begin{equation}\label{eq:Uex_ref}
\begin{aligned}
\sum_s \int_{\Omega_s} \strain{\dep_{ex}^{(s)}} :\hooke :\strain{\un{\widetilde{u}}_D^{(s)}} d\Omega &=\sum_s l\s(\un{{\widetilde{u}}}_D^{(s)}) + \sum_s \int_{\Gamma\s} (\sig_{ex} \cdot \un{n}\s) \cdot\un{\widetilde{u}}_D^{(s)}ds 
\end{aligned}
\end{equation}
Since $\un{\widetilde{u}}_D=(\un{\widetilde{u}}_D^{(s)})_s\in \KAo(\domain)$ and $\dep_{ex}$ is the exact solution, we have:
\begin{equation}
\sum_s \int_{\Gamma\s} (\sig_{ex} \cdot \un{n}\s)\cdot \un{\widetilde{u}}_D^{(s)}ds = 0
\end{equation}
while since $\hat{\sig}_N$ is statically admissible and $(\un{\widetilde{u}}_D^{(s)})_s\in \KAo(\domain)$, we have:
\begin{equation}
  I_{HH,1}=\int_\domain (\hat{\sig}_N-{\hooke}\strain{\dep_D}):\strain{\depadj_D}  d\domain=\int_\domain ({\sig}_{ex}-{\hooke}\strain{\dep_D}):\strain{\depadj_D}  d\domain
\end{equation}
Therefore, by substracting equation \eqref{eq:UD_ref} from equation \eqref{eq:Uex_ref}, we obtain \eqref{eq:ihh1b}.
\end{proof}

When the solver has reached convergence, if the meshes of the forward and adjoint problems are identical, the term $\textit{I}_{HH,1}$ is equal to zero (as a consequence of the Galerkin orthogonality). Thanks to the new expression of $\textit{I}_{HH,1}$, we can prove the following property.
\begin{prop}\label{thm:cas_particulier_primal}
Using a BDD solver, at each iteration, $\textit{I}_{HH,1}=0$.
\end{prop}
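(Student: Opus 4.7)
The plan is to start from the interface expression of $I_{HH,1}$ given by Proposition~\ref{prop:reecriture_IHH1}: $I_{HH,1} = \sum_s \lam_D^{(s)T} \eft^{(s)} \widetilde{\efdep}_D^{(s)}$. The defining feature of a BDD solver is that a single interface displacement is introduced, so that kinematic continuity is built in at every iteration. Consequently, the adjoint interface trace satisfies $\eft^{(s)} \widetilde{\efdep}_D^{(s)} = \pa^{(s)T} \widetilde{\efdep}_b$, where $\widetilde{\efdep}_b$ denotes the common adjoint interface displacement iterate. Substituting and factoring yields $I_{HH,1} = \widetilde{\efdep}_b^T \left( \sum_s \pa^{(s)} \lam_D^{(s)} \right)$.

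The next step is to identify the assembled nodal reaction. Using the explicit formula $\lam_D^{(s)} = \schur^{(s)} \efdep_b^{(s)} - \force_b^{(s)} + \stiff_{bi}^{(s)} {\stiff_{ii}^{(s)}}^{-1} \force_i^{(s)}$ produced by $\mathtt{Solve}_D$, together with the BDD relation $\efdep_b^{(s)} = \pa^{(s)T} \efdep_b$, one directly recognizes $\sum_s \pa^{(s)} \lam_D^{(s)}$ as the opposite of the residual $\res$ of the assembled primal Schur complement system $\bigl( \sum_s \pa^{(s)} \schur^{(s)} \pa^{(s)T} \bigr) \efdep_b = \sum_s \pa^{(s)} \bigl( \force_b^{(s)} - \stiff_{bi}^{(s)} {\stiff_{ii}^{(s)}}^{-1} \force_i^{(s)} \bigr)$ that the forward BDD iteration is solving. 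Hence $I_{HH,1} = - \widetilde{\efdep}_b^T \res$.

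The conclusion then relies on the block-CG orthogonality property: because forward and adjoint problems are treated simultaneously with common search directions, the iterates $\efdep_b$ and $\widetilde{\efdep}_b$ belong (modulo their common initialization) to the same block Krylov subspace, while the current forward residual $\res$ is orthogonal to that subspace by construction of the block preconditioned conjugate gradient. Choosing the standard BDD initialization additionally guarantees $\widetilde{\efdep}_b^{(0)T} \res = 0$ throughout the iterations. Combining these two facts gives $\widetilde{\efdep}_b^T \res = 0$, and therefore $I_{HH,1} = 0$ at every iteration. The main delicate point is precisely this last step: one must verify that the BDD projector $\proj_1$ and the associated initialization are compatible with the block-CG orthogonality, so that $\res$ stays orthogonal to all successive iterate updates; the standard BDD analysis supplies this, thanks to the coarse-space construction that keeps residuals and displacement increments in complementary ranges of $\proj_1$.
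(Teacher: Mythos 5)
Your proposal is correct and follows essentially the same route as the paper's own proof: rewrite $I_{HH,1}$ via Property~\ref{prop:reecriture_IHH1} as the (forward) residual of the assembled primal Schur system contracted with the adjoint interface iterate, then invoke the block-CG Galerkin orthogonality together with the BDD initialization/projector properties ($\res_0\in\operatorname{range}(\proj_1^T)$, $\widetilde{\efdep}_0\in\operatorname{ker}(\proj_1)$) to conclude nullity. The only discrepancy is a harmless sign convention: Algorithm~\ref{alg:bdd_incomp} defines the residual as $\RRes=\sum_s\pa\s\na\sT\Lam_D\s$, so no minus sign appears in the paper's identity $I_{HH,1}=\res^T\widetilde{\efdep}$.
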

\begin{proof}
Using the notation of BDD's algorithm (alg.~\ref{alg:bdd_incomp} with $\mathbf{M}=\mathbf{N}=\mathbf{I}$), we have:
\begin{equation}
I_{HH,1}= \res^T \widetilde{\efdep}
\end{equation}
The iterations of the block-CG can be defined by the following Galerkin scheme~\cite{Leary:BlockCG,Saa00}:
\begin{equation}\left\{
\begin{aligned}
\efDep_i&\in \efDep_0 + \mathcal{K}_i \\
\RRes_i &\perp \mathcal{K}_i
\end{aligned}\right.
\end{equation}
where $\efDep=[\efdep,\tilde{\efdep}]$, $\RRes=[\res,\tilde{\res}]$, and $\mathcal{K}_i$ is the Krylov subspace associated with the preconditioned operator and the initial block of residuals. This property implies that 
\begin{equation}
I_{HH,1}= \res^T \widetilde{\efdep} =  \res_0^T \widetilde{\efdep}_0
\end{equation}
In order to prove the nullity of this term, one has to look more in detail at BDD's initialization. We refer to \cite{Gos06} and simply recall that $\res_0\in \operatorname{range}(\proj_1^T)$ while $\widetilde{\efdep}_0\in \operatorname{ker}(\proj_1)$.
\end{proof}

\subsection{Assessment}

Let us consider a rectangular linear elastic structure $\Omega=[-3l;3l]\times[-3l;3l]$ with homogeneous Dirichlet boundary conditions.  Young modulus is 1 Pa and Poisson coefficient is $0.3$. The domain is subjected to a polynomial body force such that the exact solution is known: 
\begin{equation*}
\underline{u}_{ex} = (x+3l)(x-3l)(y+3l)(y-3l)\left((y-3l)^2 \underline{e}_x + (y+3l)\underline{e}_y \right)
\end{equation*}
The quantity of interest is the average of the normal stress in the direction $\un{e}_y$ in a small square of side $l/10$ centered at the origin.
 \begin{equation}
\tilde{l}(\dep)=\frac{1}{mes(\omega)} \int_{\omega} (\un{e}_y\otimes\un{e}_y):\strain{\dep} d\omega 
\end{equation}
The mesh is made out of first order Lagrange triangles, which leads to 2352 degrees of freedom. 
As shown in Figure \ref{fig:poutre_decoupage} the structure is decomposed into 9 regular subdomains.
 \begin{figure}[ht]
\centering
\includegraphics[width=.3\textwidth]{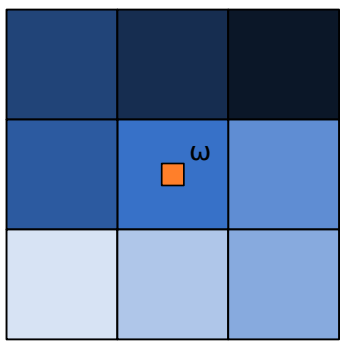}\caption{Substructuring}\label{fig:poutre_decoupage}
 \end{figure}

We solved the problem with a BDD solver (primal approach). We used the Element Equilibration Technique \cite{Lad91} to build statically admissible stress fields; each elementary problem was solved using an interpolation of degree 4.
For the sake of simplicity, we note: $e=\ecr{\dep_N,\hat{\sig}_N}{\Omega}$ and $\widetilde{e}=\ecr{\depadj_N,\hat{\sigadj}_N}{\Omega}$. The results in Table~\ref{tab:poutre_1} are given when the solver is converged (when the residual of both forward and adjoint problem are negligible with respect to the loads); they show that the error estimation in a substructered context is as efficient as in a sequential context. We recall that the slight difference is due to the fact that the static admissible fields are not identical in sequential and in domain decomposition methods.

\begin{table}[ht]\centering%
\begin{tabular}{|c||c|c|c|c|c|c|c|}
\hline 
case & $e$  & $\widetilde{e}$ &  $I_H$  & $I_{HH,2}$  & $\frac{1}{2}e\widetilde{e}$  & $I_{ex}$\\ 
\hline \ 
Sequential& 2.417& 10.23  & 2.916 & 3.305 $10^{-4}$ & 12.36  & 2.916
  \\
\hline 
Substructured&  2.418  & 10.23   &  2.916& 3.858 $10^{-4}$& 12.37&  2.916
 \\ 
\hline 
\end{tabular} 
\caption{Substructured and sequentiel goal-oriented error estimations}\label{tab:poutre_1}
\end{table}

\begin{figure}[ht]
\begin{minipage}{.49\textwidth}
\begin{tikzpicture}
\begin{semilogyaxis}[
width = 0.99\textwidth,
title=Forward problem,
xlabel=iterations,
%ylabel=Error,
legend style={at={(0.06,0.35)}, anchor=north west}] 
%legend anchor= south east]
\addplot [draw=green, mark=triangle*] table[x=iterations,y=res] {poutre_1.txt};
\addlegendentry{ {\scriptsize $\sqrt{\res^T\bz}$ } }
\addplot [draw=red, mark=diamond] table[x=iterations,y=e] {poutre_1.txt};
\addlegendentry{{\scriptsize \ecr{{\dep}_N, \hat{\sig}_N}{\Omega}}}
\end{semilogyaxis}
%\draw (-0.5,2.0) node[scale=1.,rotate=90]{Error};
\end{tikzpicture}
\end{minipage}
\begin{minipage}{.49\textwidth}
  \begin{tikzpicture}
\begin{semilogyaxis}[
width = 0.99\textwidth,
title=Adjoint problem,
xlabel=iterations,
%ylabel=Error,
legend style={at={(0.06,0.35)}, anchor=north west}] 
%legend anchor= south east]
\addplot [draw=green, mark=triangle*] table[x=iterations,y=res_adj] {poutre_1.txt};
\addlegendentry{ {\scriptsize $\sqrt{\widetilde{\res}^T\widetilde{\bz}}$ } }
\addplot [draw=red, mark=diamond] table[x=iterations,y=e_adj] {poutre_1.txt};
\addlegendentry{{\scriptsize \ecr{{\depadj}_N, \hat{\sigadj}_N}{\Omega}}}
\end{semilogyaxis}
%\draw (-0.5,2.0) node[scale=1.,rotate=90]{Error};
\end{tikzpicture}
 \end{minipage}
\caption{Error estimates with separation of sources}\label{fig:poutre_1_res}
\end{figure}
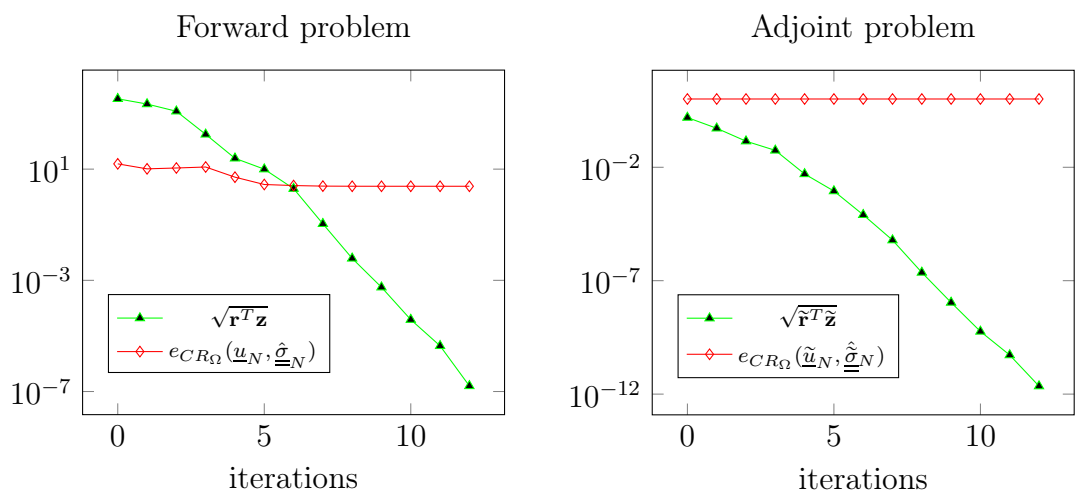

The plots on Figure~\ref{fig:poutre_1_res} show the evolution of the two contributions to the bound during the iterations of the conjugate gradient. We observe that the discretization error is almost constant whereas the algebraic error decreases along the iterations. On that example after 6 iterations, the solver errors of both forward and adjoint problems are negligible with respect to the contribution of the discretization, extra iterations are then useless, they do not improve the quality of the approximation.

\section{Improvement of the bound by local refinement}\label{section_2}
\subsection{Motivation}

The upper bounds obtained in Property~\ref{prop:separation_1_DD} and Property~\ref{prop:separation_2_DD} always involve the product between the error estimators of the forward and adjoint problem. Accordingly increasing the quality of either (or both) the forward or the adjoint problems necessarily improves the bounds.  For instance, since the solution of the adjoint problem is highly localized, it is often easier to better its estimation, either by mesh refinement or by the injection of a priori knowledge \cite{Gra06,Wae12,Cha08,Cha09}. 

Adapting a mesh, even for a local quantity of interest, is a global problem: simple algorithms like the one based on bisection \cite{NVB} propagate on the whole domain; parallel remeshing involves exchanges between neighbors (and iterations if good load balance is also wished) \cite{ParaRemesh}. Enrichment also implies integrating functions on parts of the domain. In this section, we try to benefit the domain decomposition in order to simplify these procedure. The idea is to improve the approximation only on the subdomains which most contribute to the estimators. We show that a class of adaptation algorithms  can be realized completely locally without impairing the exactness and the computability of the bound: algorithms which modify the boundary of the treated subdomains only by adding new degrees of freedom. This leads to what we call ``nested interfaces'' which means that the trace of the initial (coarse) kinematic is a subspace of the trace of the new (fine) kinematic.

Typically, bisections stopped at the boundary of the subdomains belongs to that class of algorithms, so do hierarchical h and p refinement and most enrichment methods. In the examples, for simplicity reasons, we use uniform h-refinement in the subdomains. Of course such procedures open the question of balancing the computational load between subdomains, which will be the subject of future work. Anyhow, a classical answer is to split the structure into many subdomains and to randomly attribute several subdomains to each computing unit with the hope that the to-be-refined subdomains will be evenly distributed.

The rest of the section is organized as follows. In Subsection~\ref{ssec:interpo} we show how a master/slave algorithm enables us to treat nested interfaces in order to preserve the ability to build  kinematic and static admissible fields. This property would be much more difficult to conserve with  mortar methods \cite{Ber94,Bel99}. The algorithm is presented in Subsection~\ref{ssec:alginc}; we choose to solve simultaneously the forward and the adjoint problems on the locally refined mesh, since there is no real-extra cost in solving both problems at the same time. In Subsection~\ref{ssec:assessinc}, we verify that the chosen strategy enables us to improve the bound on the estimation.

\subsection{Interpolation of displacements and forces}\label{ssec:interpo}
We assume that after a first resolution, some subdomains have been locally refined (either in $h$ or $p$ sense) or enriched, so that new degrees of freedom have been inserted on the boundary of the domain whereas former boundary nodes have been preserved. Figure~\ref{fig:incompatibilite_2SD} presents a $H/2$ refinement in the case of 2 subdomains. In order to keep the ability of reconstructing globally admissible fields, we decide the (coarse) interface to be the master when imposing displacements or forces on subdomains, and we keep the assembling operators unchanged. We thus only need to derive the interpolation matrix $\na\s$ which defines local (fine) displacements $\efdep_F\s$ from interface (coarse) displacement $\efdep_C\s=\pa\sT\efDep$: $\efdep_F\s=\na\s\efdep_C\s$; and the interpolation matrix $\ma\s$ which defines local (fine) nodal forces $\lam_F\s$ from interface (coarse) forces $\lam_C\s=\da\sT\Lam$: $\lam_F\s=\ma\s\lam_C\s$.

\begin{figure}[ht]
\centering
\includegraphics[width=.4\textwidth]{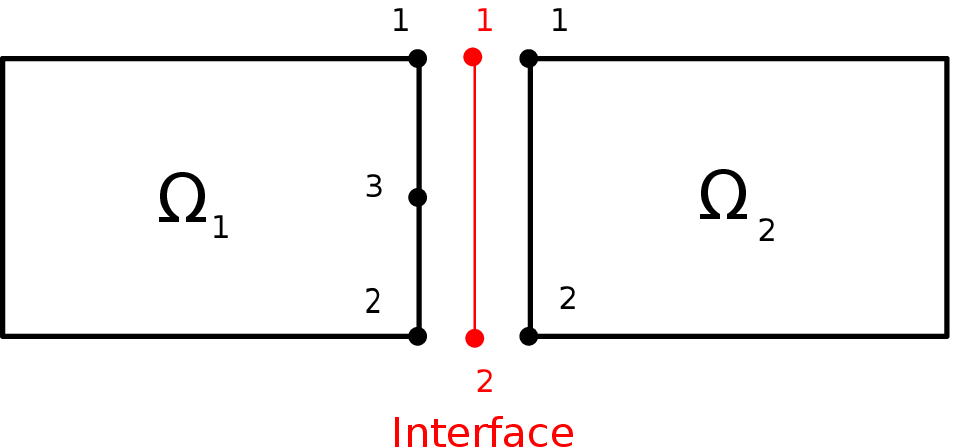}\caption{Titre}\label{fig:incompatibilite_2SD}
\end{figure}

Let $\shapeC\s$ and $\shapeF\s$ be the matrices of the coarse and fine shape functions, so that displacements write $\dep_C\s=\shapeC\s\efdep_C\s$ on the interface and  $\dep_F\s=\shapeF\efdep_F\s$ on the boundary. Let $\mathbf{X}\s_F$ be the coordinates of the fine nodes, we have $\shapeF\s(\mathbf{X}\s_F)=\matid$.
Imposing the continuity of the displacement writes:
\begin{equation*}
\begin{aligned}
&\dep_C\s=\shapeC\s\efdep_C\s= \shapeF\s\efdep_F\s=\dep_F\s \\
&\Leftrightarrow \shapeC\s(\mathbf{X}_F)\s \efdep_C\s = \efdep_F\s,
\end{aligned}
\end{equation*}
which defines the interpolation matrix for displacement:
\begin{equation}
\begin{aligned}
\na\s&= \shapeC\s(\mathbf{X}\s_F).
\end{aligned}
\end{equation}
Matrix $\na\s$ simply imposes extra fine nodes to follow the displacement imposed by the coarse discretization.\medskip

The generation of statically admissible fields requires an algorithm, noted $\mathcal{G}$ in \cite{Par10}, which defines a continuous distribution of traction $\un{t}$ from nodal reactions: $\un{t}=\mathcal{G} \lam $. 
Then transfer operator $\ma\s$ must ensure that the same continuous tractions are created from the coarse and the fine information: $\un{t}\s_F=\un{t}_C\s$. In other words $\mathcal{G}_F\ma\s = \mathcal{G}_C $.

Algorithm $\mathcal{G}$ in \cite{Par10} consists in finding a distribution of traction on the same finite element subspace as the displacement, under the constraint to preserve the mechanical work:
\begin{equation}\label{eq:def_effortCF}
\begin{aligned}
\un{t}\s_C &= \shapeC\s \left(\mass_C\s \right)^{-1} \lam\s_C &\text{with } \mass_C\s&=\int_\Gamma \shapeC\sT \shapeC\s d\Gamma\\
\un{t}\s_F &= \shapeF\s \left(\mass_F\s\right)^{-1} \lam\s_F &\text{with } \mass_F\s&=\int_\Gamma \shapeF\sT \shapeF\s d\Gamma
\end{aligned}
\end{equation}
$\mass_C\s$ and $\mass_F\s$ are the $L^2(\Gamma)$ mass matrices of the coarse and fine interpolations. 
Because of the chosen interpolation, it is sufficient to satisfy the relationship $\un{t}\s_F=\un{t}_C\s$ on the fine nodes, which leads to:
\begin{equation}
\ma\s = \mass_F\s \na\s (\mass_C\s)^{-1}
\end{equation}

\begin{rem}We also have the following classical expressions for matrices $\na\s$ and $\ma\s$:
\begin{equation}\label{eq:def_effortCF2}
\left.\begin{aligned}
\na\s &= \left(\mass_F\s \right)^{-1} \mass_{FC}\s \\
\ma\s &=  \mass_{FC}\s\left(\mass_C\s \right)^{-1} 
\end{aligned}\right\}\text{ with } \mass_{FC}\s=\int_\Gamma \shapeF\sT \shapeC\s d\Gamma
\end{equation}
When testing the balance of tractions in the interface displacements we obtain the classical preservation of the mechanical work which writes:
\begin{equation}\label{workpreserve}
\na\sT \ma\s = \matid
\end{equation}
\end{rem}

\begin{rem}
 Of course, for unchanged (non-refined) subdomains, we have  $\na\s=\ma\s=\matid$.
\end{rem}

\subsection{FETI and BDD algorithms with nonconforming interfaces}\label{ssec:alginc}
The standard BDD algorithm (see for instance Algorithm~$1$ in \cite{Rey14}) is barely modified, only the assembling operators change depending if they manipulate displacement or forces:
\begin{itemize}
\item BDD assembling operator becomes $\pa\s\na\sT$,
\item BDD scaled assembling operator becomes $\tpa\s\ma\sT$,
\end{itemize}
The final algorithm is summarized in Algorithm~\ref{alg:bdd_incomp}. 

\begin{algorithm2e}[th]\caption{Block-BDD with incompatibility, unknown $\efDep=[\efdep,\tilde{\efdep}]$}\label{alg:bdd_incomp}
$\efDep=\mathtt{Initialize}(\Force\s)$ \;  %
$(\Lam\s_D,\efDep_D\s)=\mathtt{Solve}_D(\na\s \pa\sT\efDep,\Force\s)$\;
Compute residual $\RRes=\sum_s\pa\s \na\sT \Lam\s_D$\;
Define local traction $\tLam\s=\ma\s\tpa\sT\RRes$ \tcp*[r]{$\Lam_N\s=\Lam\s_D-\tLam\s$} %\;
$\tdep\s=\mathtt{Solve}_N(\tLam\s,0)$ \tcp*[r]{$\efDep_N\s=\efDep_D\s-{\tDep}\s $}% \;
Preconditioned residual $\bZ=\sum_s\tpa\s  \ma\sT \tDep\s$ \;
Search direction $\bW= \proj_1 \bZ$\;
\While{$\sqrt{\RRes^T\bZ}>\epsilon$}{%
  $(\pLam\s_D,\pDep_D\s)=\mathtt{Solve}_D(\na\s \pa\sT \bW,0)$\;
  $\bP=\sum_s\pa\s \na\sT \pLam\s_D$\;
  $\alpha=(\bP^T\bW)^{-1}(\RRes^T\bZ)$\;
  $\efDep\leftarrow \efDep+ \bW\alpha$  \tcp*[r]{$\begin{aligned}&\efDep_D\s\leftarrow \efDep_D\s+\pDep_D\s \alpha \\&\Lam_D\s\leftarrow \Lam_D\s+\pLam_D\s\alpha \end{aligned} $}
  $\RRes \leftarrow \RRes-\bP\alpha $\;
  $\tLam\s=\ma\s \tpa\sT\RRes$ \tcp*[r]{$\Lam_N\s=\Lam\s_D-\tLam\s$}%\;
  $\tDep\s=\mathtt{Solve}_N(\tLam\s,0)$ \tcp*[r]{$\efDep_N\s=\efDep_D\s-{\tDep}\s $}%\;
  $\bZ = \sum_s\tpa\s \ma\sT \tDep_b\s$\;
  $\bW \leftarrow \proj_1 \bZ - \bW(\bP^T\bW)^{-1}(\bP^T\proj_1\bZ)  $
}%
\end{algorithm2e}
The preservation of work~(\ref{workpreserve}) %$\na\sT\ma\s=\matid$ 
together with the definition of the scaling operators~(\ref{eq:efintadmiss}) ensure that the reconstructed fields satisfy the required continuity or balance on the interface. More precisely in the BDD algorithm, the displacement is built as a continuous field, whereas the nodal forces $\Lam_N\s$ are balanced on the interface because:
\begin{equation}
\begin{aligned}
\sum_s \pa\s \na\sT \Lam_N\s&= \sum_s \pa\s \na\sT  (\Lam\s_D-\tLam\s)\\
&=(\matid  -  \sum_s \pa\s \na\sT \ma\s\tpa\sT) \RRes \\
&= (\matid - \sum_s \pa\s \tpa\sT) \RRes =0
\end{aligned}
\end{equation}

\begin{rem}
Similarly, FETI algorithm can be derived from Algorithm~2 in \cite{Rey14} by simply replacing the assembling operators:
\begin{itemize}
\item FETI assembly operator becomes $\da\s\ma\sT$,
\item FETI scaled assembling operator becomes  $\tda\s\na\sT$.
\end{itemize}
\end{rem}

\begin{rem}
When the solver is converged, because of the incompatible meshes, the displacement fields $(\dep_D\s)_s$ and $(\dep_N\s)_s$ are no longer identical.
\end{rem}

\subsection{Assessment}\label{ssec:assessinc}

Let us reuse the example of previous section. Figure~(\ref{fig:error_maps_coarse}) plots the error map of the adjoint problem on a first coarse mesh, it is essentially localized within the central subdomain; where the loading of the adjoint problem is defined.% whereas for the forward problem the error is localized in subdomain $1$, $2$ and $3$.

\begin{figure}[ht]\centering
\hfill\subfloat[Coarse mesh\label{fig:error_maps_coarse}]{ \includegraphics[width=.3\textwidth]{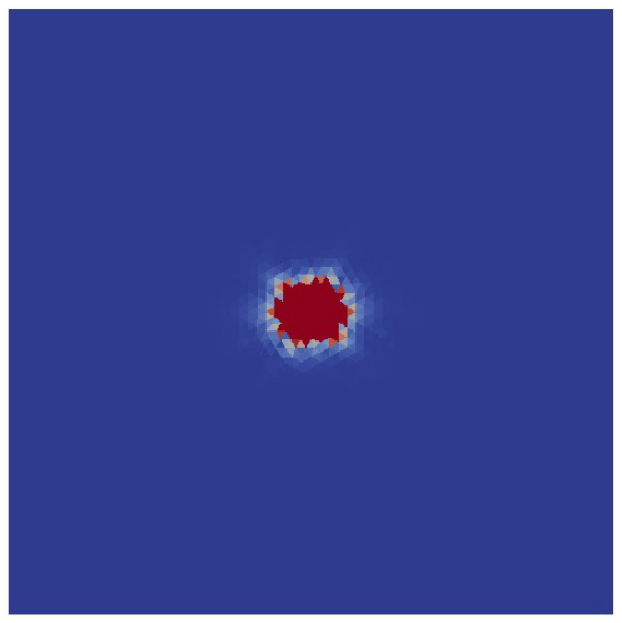}}
\hfill\subfloat[Locally refined mesh\label{fig:error_maps_coarse_fined}]{ \includegraphics[width=.3\textwidth]{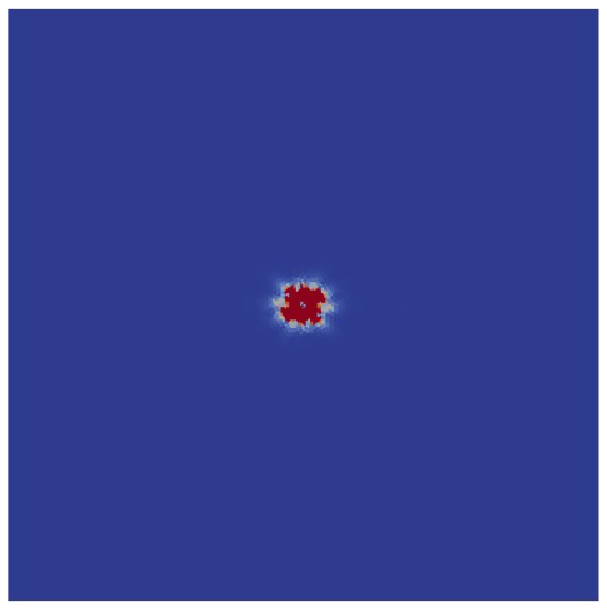}}
\hfill\subfloat[]{ \includegraphics[width=.1\textwidth]{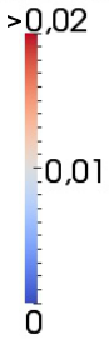}}\hfill\caption{Effect of local refinement (map of the error estimator) on the adjoint problem}\label{fig:square_adjoint}
\end{figure}
Accordingly, from the point of view of the adjoint (resp. forward) problem, we need to refine the mesh of the central subdomain. 
Three cases are considered: two with conforming meshes (i) a coarse mesh (given $H$) (ii) a fine mesh ($h=H/4$), and one with non-conforming meshes due  local refinement only in the central subdomain ($h=H/4$, see Figure \ref{fig:poutre_decoupage_and_mesh} for a schematic presentation with a coarser mesh than in actual computations). In all cases, we compute the error estimator at convergence of the solver. The number of Conjugate Gradient iterations is identical for the 3 meshes ($9$ iterations). The results are summarized in Table~\ref{tab:poutre_comp_incomp}. 
When the central subdomain is refined, the error estimator for the adjoint problem decreases while the error estimator of the forward problem remains more or less unchanged. 
As a consequence, the use of local refinement (independently by subdomains) enables us to obtain improved bounds on the quantity of interest  ($-32\%$) with a simple implementation and less degrees of freedom ($+38\%$ \#dof) than would be required by the full fine mesh ($+300\%$ \#dof for a $-65\%$ decrease of the estimator).

\begin{figure}[ht]
\centering
\includegraphics[width=.4\textwidth]{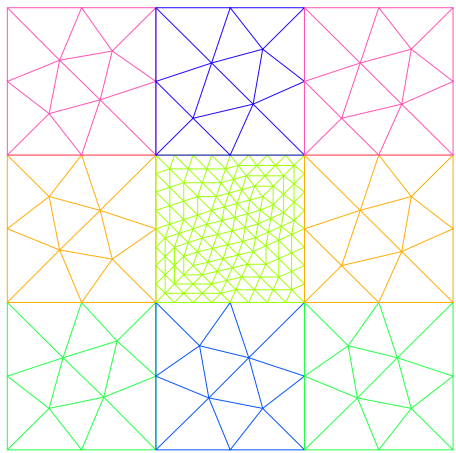}
\caption{Hierarchically refined mesh on the central subdomain}\label{fig:poutre_decoupage_and_mesh}
\end{figure}

\begin{table}[ht]\centering%
\begin{tabular}{|c|c|c|c|c|c|c|c|}
\hline 
case & $e$ & $\widetilde{e}$ &  $I_H$  & $I_{HH,2}$  &$\frac{1}{2}e\widetilde{e}$   & \#dof \\ 
\hline \ 
Coarse mesh& 2.418  & 10.23   &  2.916& 3.8575 $10^{-4}$& 12.37
&   8738 \\ 
Fine mesh & 1.216  &  7.020 &2.916  & -9.695 $10^{-4}$& 4.268
&   34988  \\ 
\hline
Refined central SD & 2.404& 7.020 & 2.9169 & 3.8575 $10^{-4}$& 8.438&   12122  \\
\hline 
\end{tabular} 
\caption{Comparaison between conforming and non conforming cases, $I_{ex}=2,916$}\label{tab:poutre_comp_incomp}
\end{table}

Figure~\ref{fig:error_maps_coarse_fined} plots the map of error for the adjoint problem when only the central subdomain is refined. The source of error due to the non-conforming interface (boundary of the central subdomain) is clearly negligible.\medskip

Of course, the use of locally refined meshes is questionable if the zone of high contribution to the error is close to a non-matching interface: in that case the driving of the fine kinematic by the coarse interface is an extra source of error. To evaluate that problem, we study how the efficiency of the local refinement is altered by the distance $d$ between the support of the quantity of interest  $\omega=[-\frac{l}{10}-d;\frac{l}{10}-d]\times[-\frac{l}{10};\frac{l}{10}]$ and the interface $\Gamma$. 

For each configuration (characterized by the distance $d$) we compute the relative discretization error (discretization error divided by the energy of the finite element solution) of the adjoint problem $\rho_{Hh}$ (resp. $\rho_{h}$) when $h=H/4$ only in the central subdomain (resp. on all subdomains). Figure~\ref{fig:etude_distance} plots the ratio $\rho_{Hh}/\rho_{h}$ as a function of the distance $d/d_0$ (where $d=d_0$ corresponds to  the quantity of interest at the center of the subdomain). 
The quality of the refined estimator is degraded only when $d/d_0<10^{-1}$ which corresponds to a quantity of interest very close to the interface. Thus the idea to conduct local refinements remains valid in a wide range of configurations.

 \begin{figure}[ht]\centering
\begin{tikzpicture}
\begin{semilogxaxis}[
width = 0.6\textwidth,
xlabel=$\frac{d}{d_0}$,
%ylabel=Error,
legend style={at={(0.06,0.35)}, anchor=north east}] 
%legend anchor= south east]
\addplot [ mark=triangle*] table[x=distance2,y=quotient_rel] {cube_distance.txt};
\addlegendentry{ {\scriptsize $\frac{\rho_{Hh}}{\rho_h}$ } }
%]
\end{semilogxaxis}
%\draw (-0.5,2.0) node[scale=1.,rotate=90]{Error};
\end{tikzpicture}
\caption{Influence of the distance d}\label{fig:etude_distance}
\end{figure}
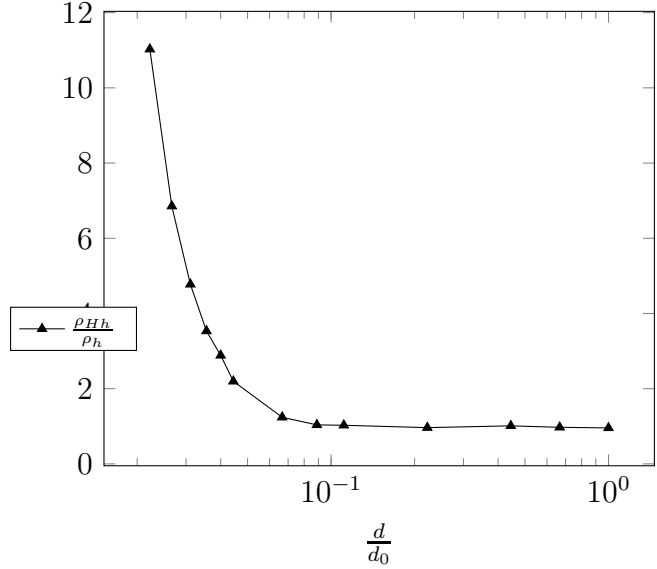

\section{Numerical experiments}\label{section_3}

The aim of these assessments is to show how easily the error bound can be improved by the refinement strategy described in Section~\ref{section_2}. 

We always use the stopping criterion proposed in Section~\ref{section_1}: iterations are stopped when the algebraic error becomes negligible (10 times smaller) with respect to the contribution of the discretization estimated at the first iteration, for both forward and adjoint problems. We verify that the discretization error computed with the new interface fields  barely varies, as could be expected from \cite{Rey14}. 

After a first resolution on a coarse mesh, we analyze the contributions of subdomains to the forward and adjoint error estimator using the following measures:
\begin{equation}
 \eta\s=\frac{\ecr{\dep_N\s,\hat{\sig}_N\s}{\Omega\s}}{\sqrt{\sum_s\ecrc{\dep_N\s,\hat{\sig}_N\s}{\Omega\s}}} \qquad \text{and}\qquad
\widetilde{\eta}\s=\frac{\ecr{\depadj_N\s,\hat{\sigadj}_N\s}{\Omega\s}}{\sqrt{\sum_s\ecrc{\depadj_N\s,\hat{\sigadj}_N\s}{\Omega\s}}}
\end{equation}
From that information we deduce h-refinement strategies and evaluate their ability to improve the estimator.

\subsection{The Gamma-shaped structure}
We consider a Gamma-shaped structure clamped on its base and submitted to prescribed  traction and shear on the upper-right side. The problem is linear elastic with unit Young modulus and 0.3 Poisson coefficient. The initial mesh is made out of first order triangles. The structure is decomposed into 7 domains and we use the classical BDD solver. The quantity of interest is the mean value of the strain $\varepsilon_{yy}$ over the region $\omega$ located in the seventh subdomain, as illustrated in Figure \ref{fig:GSS_adjoint}.  We build statically admissible stress fields with the Flux-free technique \cite{Cot09,Par06}. Each star-patch problem is solved with a forth order interpolation.

\begin{figure}[ht]
\centering
\includegraphics[width=.5\textwidth]{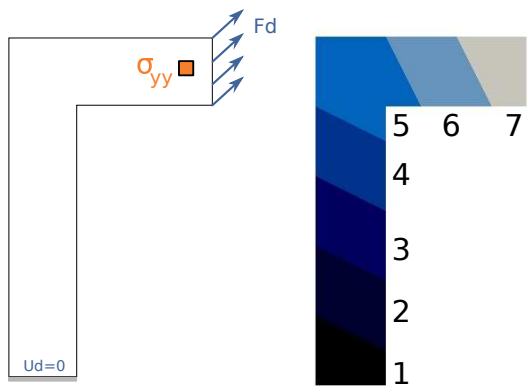}\caption{{\footnotesize Loading of forward (blue) and adjoint problems (orange)}}\label{fig:GSS_adjoint}
\end{figure}

The contributions $\eta\s$ and $\widetilde{\eta}\s$ to the error estimator on the initial mesh are presented in Figure~\ref{Fig:GSS_uniforme}. We observe that the major contribution to the error for the forward problem is located in the fifth subdomain (because of the corner). 
The objective is to refine this domain in order to obtain a level of error comparable to subdomain~1 (which is the second largest contributor). It implies to divide the error in the fifth subdomain by 2.8. The convergence in this subdomain is driven by the corner singularity. Therefore, we decide to perform three hierarchical h-refinements in the fifth subdomain. The quantity of interest being localized in Subdomain~7, we also process 3 hierarchical h-refinements in that subdomain.

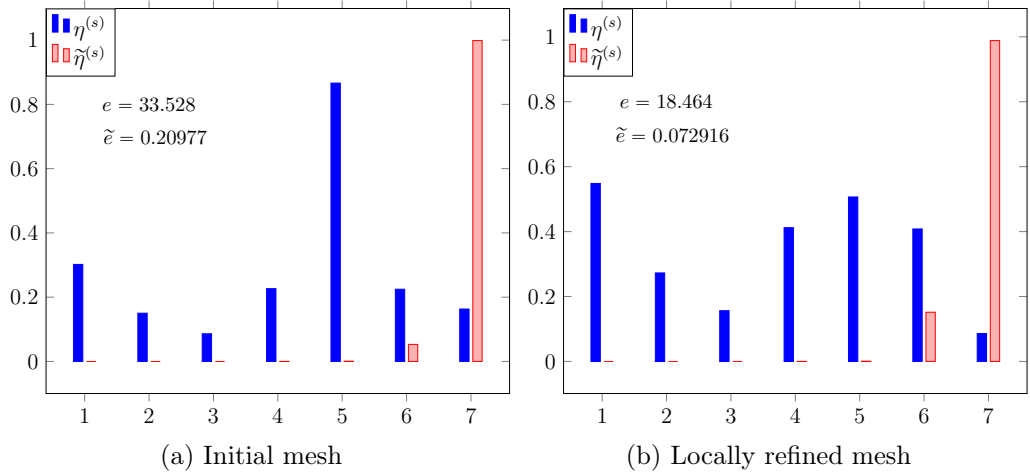
\begin{figure}
\subfloat[Initial mesh\label{Fig:GSS_uniforme}]{
\begin{tikzpicture}[scale=.7]
\begin{axis} [width= 0.7\textwidth, ybar,bar width=5pt,xtick={1,2,3,4,5,6,7},legend style={at={(0,1)},anchor=north west}]
  \addplot [fill=blue,draw=blue] table[x=sd,y=e_ref_rel]{GSS_res1.txt};
  \addplot table[x=sd,y=e_adj_rel]{GSS_res1.txt};
\legend{$\eta\s$, $\widetilde{\eta}\s$}
\node at (axis cs:2,0.8) {{\small $e=33.528$}};
\node at (axis cs:2.1,0.7) {{\small $\widetilde{e}=0.20977$}};
\end{axis}
\end{tikzpicture}}
\subfloat[Locally refined mesh\label{Fig:GSS_res3}]{
\begin{tikzpicture}[scale=.7]
\begin{axis} [width= 0.7\textwidth,ybar,bar width=5pt,xtick={1,2,3,4,5,6,7},legend style={at={(0,1)},anchor=north west}]
  \addplot [fill=blue,draw=blue] table[x=sd,y=e_ref_rel]{GSS_res3.txt};
\addplot table[x=sd,y=e_adj_rel]{GSS_res3.txt};
\node at (axis cs:2,0.8) {{\small $e=18.464$}};
\node at (axis cs:2.1,0.7) {{\small $\widetilde{e}=0.072916$}};
\legend{$\eta\s$, $\widetilde{\eta}\s$}
\end{axis}
\end{tikzpicture}}\caption{Distribution of the error estimator within subdomains}\label{Fig:GSS_reszozo}
\end{figure}

As expected, we can see on Figure \ref{Fig:GSS_res3} that the forward error in Subdomain~5 and Subdomain~1 are nearly the same on the new mesh.  We also observe that the error of the adjoint problem has decreased (due to the refinement of Subdomain~7). 
In Table~\ref{tab:GSS_global}, we sum up the global values of discretization errors, the quantities~$I_H$ and~$I_{HH,2}$. 
\begin{table}[ht]
 \centering 
\begin{tabular}{|c||c|c|c|c|c|c|}
\hline 
Mesh & $e$ & $\widetilde{e}$ &  $I_H$  & $I_{HH,2}$  & $\frac{1}{2}e\widetilde{e}$    \\ 
\hline \ 
Uniform &33.528 &  0.20977 & -0.39157 &-0.0010045  & 3.5165\\ 
\hline
Locally refined &18.464 &0.072916  &-0.39006  &  -0.00050892& 0.67316 \\
\hline 
\end{tabular} \caption{Performance of local refinement on the $\Gamma$-shaped structure}\label{tab:GSS_global}
\end{table}

\subsection{The pre-cracked structure}

We now consider the pre-cracked structure decomposed into 16 subdomains as illustrated in Figure~\ref{fig:Diez_adjoint}. The displacements at the base of the structure and on the bigger hole are imposed to be zero. The upper-left part and the second hole are subjected to a constant unit pressure. The quantity of interest is the mean of the stress component $\sigma_{xx}$  on a region $\omega$ close to the crack. In Figure \ref{fig:Diez_adjoint}, the loading of the reference problem is in blue and the loading of adjoint problem is in orange. We used the FETI algorithm (dual approach) to solve the interface problem and the statically admissible stress fields are built using the Flux-free technique \cite{Cot09,Par06} with forth order interpolation for problems on star-patches.

\begin{figure}[ht]
\centering
\includegraphics[width=.5\textwidth]{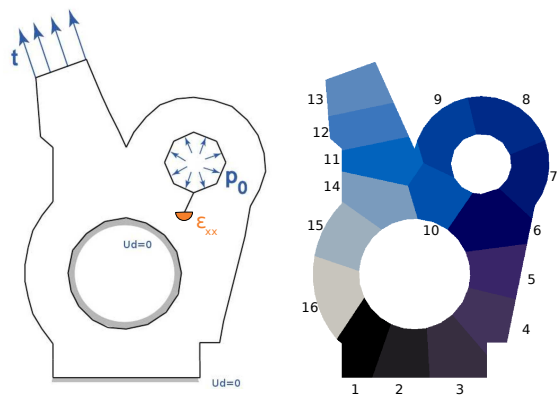}\caption{{\footnotesize Loading of forward (blue) and adjoint problems (orange), domain decomposition}}\label{fig:Diez_adjoint}
\end{figure}

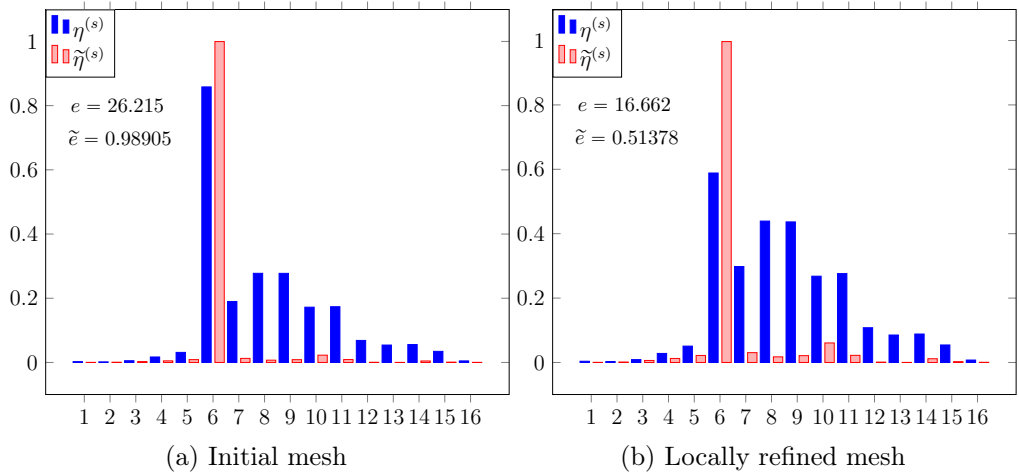
\begin{figure}[ht]
\subfloat[Initial mesh\label{fig:Diez_uniforme}]{\begin{tikzpicture}[scale=.7]
\begin{axis} [width= 0.7\textwidth,ybar,bar width=5pt,xtick={1,2,3,4,5,6,7,8,9,10,11,12,13,14,15,16},legend style={at={(0,1)},anchor=north west}]
  \addplot [fill=blue,draw=blue]  table[x=sd,y=e_ref_rel]{diez_res1.txt};
\addplot table[x=sd,y=e_adj_rel]{diez_res1.txt};
\legend{$\eta\s$, $\widetilde{\eta}\s$}
\node at (axis cs:2.3,0.8) {{\small $e=26.215$}};
\node at (axis cs:2.4,0.7) {{\small $\widetilde{e}=0.98905$}};
\end{axis}
\end{tikzpicture}}
\subfloat[Locally refined mesh\label{fig:Diez_res4}]{\begin{tikzpicture}[scale=.7]
\begin{axis} [width= 0.7\textwidth,ybar,bar width=5pt,xtick={1,2,3,4,5,6,7,8,9,10,11,12,13,14,15,16},legend style={at={(0,1)},anchor=north west}]
  \addplot [fill=blue,draw=blue] table[x=sd,y=e_ref_rel]{diez_res4.txt};
\addplot table[x=sd,y=e_adj_rel]{diez_res4.txt};
\legend{$\eta\s$, $\widetilde{\eta}\s$}
\node at (axis cs:2.3,0.8) {{\small $e=16.662$}};
\node at (axis cs:2.4,0.7) {{\small $\widetilde{e}=0.51378$}};
\end{axis}
\end{tikzpicture}}
\caption{Distribution of error within subdomains}\label{fig:Diez_zozo}
\end{figure}
In Figure \ref{fig:Diez_uniforme}, we observe that the discretization errors of the forward and adjoint problem are quasi totally located in the sixth subdomain.  Therefore, we perform hierarchical refinement in this subdomain in order to reach a level of forward error comparable to the second larger contributor (Subdomain~8). That corresponds to dividing  the error in Subdomain~6 by a factor 3. Because of the singularity in the subdomain, convergence is expected to evolve like $\sqrt{h}$ so we perform~4 hierarchical refinements (each coarse element edge is divided into~16 fine edges).

As can be seen in Figure~\ref{fig:Diez_res4}, after local refinement the contributions of Subdomains~6 and~8 to the forward error are equivalent, and the adjoint error also decreased, as expected. The global performance is summed up  in Table~\ref{tab:Diez_global}. In the end, the bound on the estimator on the quantity of interest was decreased by a factor 3.

 \begin{table}[ht]\centering%
\begin{tabular}{|c||c|c|c|c|c|c|}
\hline 
Mesh & $e$ & $\widetilde{e}$ &  $I_H$  & $I_{HH,2}$  & $\frac{1}{2}e\widetilde{e}$     \\ 
\hline \ 
Uniform & 26.215 & 0.98905  &  2.4915& 3.1935 &12.964 \\ 
\hline
Locally refined & 16.662& 0.51378 & 3.2165 & 0.086055 &4.2803 \\
\hline  
\end{tabular} \caption{Performance of local refined for the cracked structure}\label{tab:Diez_global}
\end{table}

\section{Conclusion}\label{section_4}

In this article, we extended two classical results of goal-oriented error estimation to problems solved by domain decomposition methods. We solved simultaneously both forward and adjoint problems on the same mesh using a block conjugate gradient algorithm and proposed a fully parallel procedure to recover admissible fields and estimate the error on the quantity of interest. Moreover, the error bounds separated the contribution of the discretization from the contribution of the iterative solver which lead to defining of a stopping criterion of the iterative solver that prevented oversolving. 

We also showed that the method could be easily extended to the case of locally enriched kinematics (using $h$ or $p$ refinements or dedicated enrichments) only constrained to contain the original coarse kinematic on the boundary of subdomains. This property provided a simple parallel way to improve estimators as soon as the singularities (local quantity of interests, crack tips, etc) were not too close to the interface. The concepts were validated on academic examples using the most basic $h$-refinement strategy. Of course using ad-hoc enrichment instead of brutal $h$-refinement should improve the performance of the method. 

Future work will focus on exploiting the fact that the shape of the interface is preserved by the enrichment to define acceleration strategies, and on defining original strategies to balance the computational load between subdomains for the resolution of the refined problem.

\bibliography{Biblio}
\end{document}